\documentclass[conference]{IEEEtran}
\IEEEoverridecommandlockouts
\usepackage{cite}
\usepackage{graphicx} 
\usepackage{amsmath}
\usepackage{graphicx}
\usepackage{algorithm}
\usepackage{algpseudocode} 
\usepackage{algorithmicx}  
\usepackage{url}
\usepackage{booktabs} 
\usepackage{subcaption}
\usepackage{amsmath}
\usepackage{amssymb}
\usepackage{amsthm}

\newtheorem{lemma}{Lemma}

\def\BibTeX{{\rm B\kern-.05em{\sc i\kern-.025em b}\kern-.08em
    T\kern-.1667em\lower.7ex\hbox{E}\kern-.125emX}}
\begin{document}

\title{Design and Analysis of Chirp-Layered Superposition Coding for LoRa\\
}

\author{\IEEEauthorblockN{Jingxiang Huang}
\IEEEauthorblockA{\textit{Faculty of Computer Science} \\
\textit{Dalhousie University}\\
Halifax, Canada \\
jn661503@dal.ca}
\and
\IEEEauthorblockN{Samer Lahoud}
\IEEEauthorblockA{\textit{Faculty of Computer Science} \\
\textit{Dalhousie University}\\
Halifax, Canada \\
sml@dal.ca}
}

\maketitle

\begin{abstract}
This paper investigates the design of chirp-layered superposition coding for LoRa, where an additional waveform is linearly superposed on a standard LoRa transmission with minimal impact on the LoRa demodulation process. We first show that any non-zero superposed signal perturbs the output of the standard dechirp-and-DFT demodulator, and then characterize the class of superposed waveforms that minimize this degradation under a given power budget. In particular, we show that a high spreading factor (high-SF) LoRa waveform superposed on a low-SF signal (e.g., SF12 on SF7) can be designed so that its impact on the standard LoRa demodulation remains small. As a result, within each low-SF symbol interval, the high-SF segment can be treated as a quasi-narrowband carrier that conveys an additional BPSK stream. We derive analytical error-rate expressions for both the low-SF LoRa layer and the superposed high-SF layer, and validate them through Monte Carlo simulations. The proposed chirp-layered superposition coding scheme improves the spectral efficiency of LoRa-based links and uses a relatively simple transceiver architecture.

\end{abstract}

\begin{IEEEkeywords}
LoRa; superposition
\end{IEEEkeywords}

\section{Introduction}

\subsection{Background}

LoRa has become one of the most widely used physical-layer technologies for low-power wide-area networks (LPWANs) \cite{Semtech}. At the physical layer, LoRa employs chirp spread spectrum (CSS) modulation, which offers high sensitivity and robustness to interference and multipath, while enabling simple transceivers. These benefits, however, come at the cost of low data rates and modest spectral efficiency. The data rate of LoRa can be adjusted by changing the spreading factor (SF): a smaller SF leads to a shorter symbol duration and thus a higher data rate. However, adjacent SFs differ by a factor of two in symbol duration while increasing the payload by only one bit per symbol, which results in coarse data-rate granularity and modest spectral efficiency. This motivates physical-layer designs that fill the large data-rate gaps between adjacent SFs, increasing throughput and spectral efficiency while preserving LoRa’s key advantages.

Existing attempts to boost LoRa's throughput typically enrich the modulation carried by a single LoRa symbol, for example by exploiting additional dimensions such as phase, amplitude, chirp shape, or index. Another line of work increases throughput by allowing multiple transmitters to share the same time-frequency resources via superposition of their signals. In contrast, we apply superposition at the waveform level. Rather than modifying the symbol alphabet or requiring a new demodulator, we linearly superpose a carefully designed auxiliary waveform onto each legacy LoRa symbol. In this way, we increase data rate and spectral efficiency in a lightweight manner, while the core LoRa demodulation chain can remain unchanged and the overall receiver complexity is not significantly increased.

\subsection{Related Work}

A number of works have analytically characterized the CSS modulation from different aspects. Prior work has characterized CSS/LoRa modulation from multiple angles, including symbol (quasi-)orthogonality and spectral properties \cite{Chiani2019}, inter-SF interference due to imperfect orthogonality \cite{croce2018}, and analytical SER under noise and interference \cite{Orion2019}.

Many works aim to increase LoRa throughput by embedding extra bits into additional modulation dimensions of LoRa symbols. 

PSK-LoRa \cite{PSK2019} encodes information in two dimensions: the conventional LoRa symbol index and an additional \(M\)-ary PSK phase imposed on each symbol. Subsequent extensions further expand the modulation space by exploiting the chirp slope as a third dimension \cite{SSKPSK2023} and by refining the phase mapping and detector design to improve robustness against practical impairments \cite{ePSK_LoRa2021}. Several works exploit the amplitude modulations (AM): QR-LoRa in \cite{qrlora2025} lets the symbol envelope carry extra information via quantized amplitude ramps and CloakLoRa in \cite{cloaklora} uses small amplitude variations on top of a legacy LoRa transmission to build a covert AM channel. There are also index-modulation based designs. Frequency-shift CSS with index modulation (FSCSS-IM) in \cite{FSCSS} encodes bits in the index of orthogonal chirp combinations to increase the data rate of CSS systems. Multiple-chirp-rate index modulation (MCR-IM) for LoRa in \cite{MRCIM2025} and spreading-factor-index-aided LoRa (SFI-LoRa) in \cite{SFI_2025} follow a similar idea: they build a LoRa-compatible codebook of CSS waveforms (e.g., from Chu–Zadoff sequences with different chirp rates or spreading-factor patterns) and map information to the waveform index. Some works depart from linear chirps. CurvingLoRa in \cite{curving2022} replaces linear chirps with non-linear chirps to mitigate energy convergence in collisions and increase concurrent transmission capacity, while BIC-LoRa in \cite{bic2024} further encodes extra bits directly in the chirp shape, using a richer set of non-linear chirps to convey more information.

While these symbol-level modifications of the LoRa waveform can significantly improve spectral efficiency, they typically require non-standard receivers and more complex demodulation algorithms compared with legacy LoRa chips.

Several works apply non-orthogonal multiple access (NOMA) to LoRa or LoRa-like CSS by superposing multiple users in the power domain on the same time-frequency resources. For uplink scenarios, \cite{Uplink2019} considers a NOMA-enabled LPWAN/LoRaWAN and jointly optimizes the radio-resource configuration to improve spectral efficiency and user fairness, while \cite{noma} designs a random-access control scheme for a NOMA-enabled LoRa network that coordinates users’ transmission parameters so that power-domain multiplexing at the gateway can successfully resolve collisions. NOMA-based schemes for LoRa mainly superpose conventional LoRa symbols from different users and focus on optimizing network resources so that multiple links can coexist without disturbing each other. 

Super-LoRa in \cite{superlora} explicitly superposes multiple payload symbols within the same transmission to enhance throughput, leveraging the robustness of the LoRa demodulator. Our work is similar in spirit, as it also exploits superposition within a single LoRa transmission, but we redesign the LoRa symbol itself and superpose a new chirp layer on the legacy waveform to increase spectral efficiency while ensuring both layers could be reliably demodulated.

\subsection{Contributions}

The main contributions of this paper are as follows.
i) We show that no non-zero waveform can be completely transparent to the legacy LoRa demodulator, and we derive conditions under which a superposed waveform induces only marginal degradation.
ii) We prove that a high-SF LoRa waveform superposed on a low-SF symbol approximately satisfies these conditions.
iii) Leveraging this property, we treat a high-SF segment as a quasi-narrowband carrier for a BPSK stream, derive closed-form error-rate expressions for both layers, and validate them via Monte Carlo simulations.

\section{System Model}
\label{II}

\subsection{Properties of LoRa Symbols} 

LoRa uses CSS modulation with configurable bandwidths \(B\) of 125, 250, or 500 kHz. For a spreading factor (SF), each LoRa symbol encodes \(\mathrm{SF}\) bits, yielding \(N = 2^{\mathrm{SF}}\) distinct symbols. The symbol indices range from \(0\) to \(N-1\), and the symbol at index \(0\) is called the \emph{upchirp}. A baseband LoRa symbol \(s \in \{0,1,\dots,N-1\}\) starts at a frequency of \(\frac{sB}{N}-\frac{B}{2}\) and increases over time with a slope of \(\frac{B}{T_s}\). When the instantaneous frequency reaches \(\frac{B}{2}\), it is wrapped to \(-\frac{B}{2}\) and continues increasing at the same slope. 

To analyze the modulation more conveniently, we work with a discrete-time baseband model. The duration of each LoRa symbol is \(T=\frac{2^{\text{SF}}}{B}\). Choosing a sampling rate \(f_s = B\), each symbol with index \(s\) is represented by \(N\) complex samples \(x_s[n]\). The samples can be written as
\begin{equation}
    x_s[n] = e^{j\phi_s[n]}, \quad n=0,\dots,N-1,
\end{equation}
where the symbol phase is given by
\begin{equation}
    \phi_s[n]
    = \frac{2\pi}{N}\Bigl(\frac{n^2}{2} + \bigl(s - \tfrac{N}{2}\bigr)n\Bigr).
    \label{eq:lora_phase_quadratic}
\end{equation}
The folding of the instantaneous frequency at \(\pm \tfrac{B}{2}\) is implicitly handled by the complex exponential.

\begin{lemma}[Completeness and Orthogonality of LoRa Symbols]
For a given spreading factor \(\mathrm{SF}\), the \(N = 2^{\mathrm{SF}}\) LoRa baseband symbols \(\{x_s\}_{s=0}^{N-1}\) form a complete orthogonal basis of \(\mathbb{C}^N\).
\end{lemma}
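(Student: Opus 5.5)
The plan is to factor every LoRa symbol as a common upchirp times a discrete Fourier exponential, and then use orthogonality of the DFT basis. From \eqref{eq:lora_phase_quadratic} we have
\begin{equation*}
x_s[n] = e^{j\frac{2\pi}{N}\left(\frac{n^2}{2} - \frac{N}{2}n\right)}\, e^{j\frac{2\pi}{N} s n} = x_0[n]\, e^{j 2\pi s n/N},
\end{equation*}
where \(x_0[n]\) is the upchirp. Every sample of the upchirp has unit modulus, \(|x_0[n]|=1\). Hence the symbol set is obtained from the DFT vectors \(f_s[n]=e^{j2\pi sn/N}\) by applying the diagonal matrix \(D=\mathrm{diag}(x_0[0],\dots,x_0[N-1])\), and this matrix is unitary.

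For orthogonality we compute the inner product directly:
\begin{equation*}
\langle x_s, x_{s'}\rangle = \sum_{n=0}^{N-1} x_s[n]\,\overline{x_{s'}[n]} = \sum_{n=0}^{N-1} |x_0[n]|^2 e^{j2\pi (s-s')n/N} = \sum_{n=0}^{N-1} e^{j2\pi (s-s')n/N}.
\end{equation*}
If \(s=s'\), this sum equals \(N\). If \(s\neq s'\), then \(0<|s-s'|<N\), so \(e^{j2\pi(s-s')/N}\neq 1\). The finite geometric series therefore evaluates to \(\bigl(1-e^{j2\pi(s-s')}\bigr)/\bigl(1-e^{j2\pi(s-s')/N}\bigr)=0\). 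This gives \(\langle x_s,x_{s'}\rangle = N\delta_{s,s'}\).

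Completeness then follows from a dimension count. There are \(N\) mutually orthogonal vectors, and each is nonzero with squared norm \(N\), so they are linearly independent. Since they lie in the \(N\)-dimensional space \(\mathbb{C}^N\), they span it. Equivalently, the matrix with columns \(x_s/\sqrt{N}\) equals \(DF\), where \(F\) is the unitary DFT matrix. A product of unitary matrices is unitary, so these columns form an orthonormal basis.

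The one step that needs care is the factorization itself. The paper says the frequency wrap-around at \(\pm B/2\) is ``implicitly handled'' by the exponential. We must therefore make sure the quadratic phase formula, with no explicit modular folding, really represents the symbols, so that the factor \(e^{j2\pi sn/N}\) is exact. This holds because the only difference from an explicitly folded phase is an integer multiple of \(2\pi\) at integer \(n\). Note also that \(n^2/2\) need not be an integer, but this term is common to all \(s\) and cancels in the inner product. Beyond this check, the argument is routine.
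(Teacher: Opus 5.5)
Your proposal is correct and follows essentially the same route as the paper: you factor $x_s[n]$ as the $s$-independent unit-modulus chirp $g[n]=x_0[n]$ times $e^{j2\pi sn/N}$, reduce the inner product to a finite geometric series equal to $N\delta_{s,s'}$, and obtain completeness by counting $N$ mutually orthogonal nonzero vectors in $\mathbb{C}^N$. Your additional remarks are both correct and go beyond the paper's proof: the $DF$ unitary-matrix view, and the check that the frequency wrap-around changes the phase only by integer multiples of $2\pi$ at the sample instants (since the fold occurs at the integer index $N-s$).
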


\begin{proof}
Using \eqref{eq:lora_phase_quadratic}, we factor \(x_s[n]\) as
\begin{equation}
    x_s[n]
    = \exp\!\left\{ j\frac{2\pi}{N}\Bigl(\frac{n^2}{2} - \frac{N}{2}n\Bigr) \right\}
      e^{j 2\pi s \frac{n}{N}}
    = g[n]\, e^{j 2\pi s \frac{n}{N}},
\end{equation}
where \(g[n]\) does not depend on \(s\) and satisfies \(|g[n]| = 1\) for all \(n\).
For \(s_1,s_2 \in \{0,\dots,N-1\}\), their inner product is
\begin{align}
    \langle x_{s_1}, x_{s_2} \rangle
    &= \sum_{n=0}^{N-1} x_{s_1}[n]\, x_{s_2}^*[n]
    = \sum_{n=0}^{N-1} |g[n]|^2 e^{j 2\pi (s_1-s_2)\frac{n}{N}}\\
    &= \sum_{n=0}^{N-1} e^{j 2\pi (s_1-s_2)\frac{n}{N}}= 
    \begin{cases}
        N, & s_1 = s_2,\\[0.3em]
        0, & s_1 \neq s_2,
    \end{cases}
\end{align}
where the last equality follows from the finite geometric series.
Thus \(\{x_s\}_{s=0}^{N-1}\) is a set of \(N\) mutually orthogonal nonzero vectors in
the \(N\)-dimensional space \(\mathbb{C}^N\), and hence forms a complete orthogonal basis.
\end{proof}

Once we know that a waveform cannot vanish in the subspace spanned by the LoRa symbols, the next question is how its energy manifests at the demodulator.

Let \(r[n]\) denote the received signal. The first step of demodulation is \textit{dechirping}, which consists of multiplying \(r[n]\) by the complex conjugate of the \emph{upchirp} \(x_0^*[n]\) for each sampling point \(n\)
\begin{equation}
    y[n] = x_0^*[n]\, r[n].
\end{equation}
An \(N\)-point Discrete Fourier Transform (DFT) is then applied to \(y[n]\) in order to obtain the decision metric \(Y[k]\), which is given by
\begin{equation}
Y[k] = \frac{1}{\sqrt{N}}\sum_{n=0}^{N-1} y[n]\, e^{-j\frac{2\pi}{N}kn},\quad k=0,1,\dots,N-1.
\end{equation}
The transmitted symbol index \(s\) is recovered by finding the DFT bin in \(Y[k]\) with the largest magnitude
\begin{equation}
\hat{s}
\;=\;
\arg\max_{0 \,\le\, k \,<\, N}\;\bigl|Y[k]\bigr|.
\label{decision}
\end{equation}
Ideally, the \(s\)-th LoRa symbol only produces a peak in the \(s\)-th frequency bin in the decision metric \(Y[k]\).

\begin{lemma}[Energy Conservation of Demodulation Process]
Let \(u[n]\) denote an arbitrary waveform, and let \(E_{\mathrm{u}} = \sum_{n=0}^{N-1} |u[n]|^2\) denote its energy. Let \(U[k]\) be the \(N\)-point DFT of the dechirped waveform, then the dechirp-and-DFT operation preserves the energy
\begin{equation}
    E_{\mathrm{u}} = \sum_{n=0}^{N-1} |u[n]|^2
    = \sum_{k=0}^{N-1} |U[k]|^2.
\end{equation}
\end{lemma}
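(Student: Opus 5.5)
The plan is to split the dechirp-and-DFT map into two unitary-type steps and show each preserves the $\ell^2$ norm. Write $v[n] = x_0^*[n]\,u[n]$ for the dechirped waveform and
\[
U[k] = \frac{1}{\sqrt{N}}\sum_{n=0}^{N-1} v[n]\,e^{-j\frac{2\pi}{N}kn}.
\]
It then suffices to prove two equalities:
\[
\sum_{n}|u[n]|^2 = \sum_{n}|v[n]|^2
\quad\text{and}\quad
\sum_{n}|v[n]|^2 = \sum_{k}|U[k]|^2 .
\]

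The first step uses only the constant-envelope property of the upchirp. From the definition, $x_0[n]=e^{j\phi_0[n]}$ with $\phi_0[n]$ real. Hence $|x_0^*[n]|=1$, so $|v[n]|^2=|u[n]|^2$ for every $n$, and summing over $n$ gives the first equality. This is the same $|g[n]|=1$ observation used in the proof of the completeness and orthogonality lemma, since $x_0=g$.

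The second step is Parseval's identity for the $1/\sqrt{N}$-normalized DFT, which I would derive directly rather than cite. I would expand
\[
\sum_k |U[k]|^2 = \frac{1}{N}\sum_{n,m} v[n]\,v^*[m]\sum_{k=0}^{N-1} e^{-j\frac{2\pi}{N}k(n-m)}.
\]
The inner geometric sum equals $N$ when $n=m$ and $0$ otherwise, which is the same finite geometric series computation as in the previous lemma. Only the diagonal terms survive, giving $\sum_n|v[n]|^2$.

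Chaining the two equalities yields $E_u=\sum_k|U[k]|^2$. An equivalent framing is that $U[k]=\frac{1}{\sqrt N}\langle u, x_k\rangle$, because $x_0^*[n]e^{-j2\pi kn/N}=x_k^*[n]$. By the completeness and orthogonality lemma, $\{x_k/\sqrt N\}$ is an orthonormal basis of $\mathbb{C}^N$, so the identity is exactly Parseval for that basis. I may present this version as a remark, since it connects the result to the previous lemma and to the subsequent discussion of where a superposed waveform's energy lands.

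I do not expect a real obstacle. The only point needing care is the normalization: the $1/\sqrt N$ factor must make the DFT unitary, not merely orthogonal up to a factor $N$. I will check this explicitly in the geometric sum.
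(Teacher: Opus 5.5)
Your proposal is correct and follows the paper's proof: dechirping multiplies by a unit-modulus sequence, so it preserves time-domain energy, and the $\frac{1}{\sqrt{N}}$-normalized DFT is unitary, so the energy reappears in the $N$ bins. The paper simply cites unitarity, whereas you derive Parseval from the geometric sum and note the equivalent reading of $U[k]=\frac{1}{\sqrt N}\langle u, x_k\rangle$ as Parseval for the orthonormal LoRa basis $\{x_k/\sqrt N\}$; both additions are correct.
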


\begin{proof}
Because \(|x_0[n]| = 1\) for all \(n\), multiplication by the upchirp (or its conjugate) is a unit-modulus operation and therefore preserves the energy of \(u[n]\) in the time domain. The \(N\)-point DFT with \(\frac{1}{\sqrt{N}}\)
normalization is a unitary transform, so it also preserves energy. Hence, the energy of the waveform must reappear as energy in the \(N\) DFT bins.
\end{proof}

The next question is how to distribute the energy \(E_u\) of an additional waveform superposed on the LoRa symbol so as to minimize the worst-case per-bin energy. This leads to the following result.

\begin{lemma}[Uniform Allocation Minimizes Worst-Case Per-Bin Energy]
\label{lemma3}
Consider complex sequence \(\{U[k]\}_{k=0}^{N-1}\) with total energy constraint
\begin{equation}
    \sum_{k=0}^{N-1} |U[k]|^2 = E_{\mathrm{u}}.
\end{equation}
Then the minimum achievable worst-case per-bin interference power
\(\max_{0 \le k < N} |U[k]|^2\) is
\begin{equation}
    \min_{\{U[k]\}} \max_{0 \le k < N} |U[k]|^2
    = \frac{E_{\mathrm{u}}}{N},
\end{equation}
and this minimum is attained if and only if
\begin{equation}
    |U[0]|^2 = |U[1]|^2 = \dots = |U[N-1]|^2 = \frac{E_{\mathrm{u}}}{N}.
\end{equation}
\end{lemma}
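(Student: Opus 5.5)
The argument is a pigeonhole (averaging) bound followed by a check of when that bound is tight. Write \(p_k = |U[k]|^2 \ge 0\). The constraint is \(\sum_{k=0}^{N-1} p_k = E_{\mathrm{u}}\), and the objective is \(\max_k p_k\). By Lemma~2, any superposed waveform of energy \(E_{\mathrm{u}}\) produces exactly such a nonnegative profile after dechirp-and-DFT. The problem therefore reduces to minimizing the largest entry of a nonnegative vector with a fixed sum.

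\emph{Lower bound.} The maximum of a finite set of reals is at least their average, so for any admissible sequence
\begin{equation}
\max_{0\le k<N} p_k \;\ge\; \frac{1}{N}\sum_{k=0}^{N-1} p_k \;=\; \frac{E_{\mathrm{u}}}{N}.
\end{equation}
This holds for every feasible \(\{U[k]\}\), so the minimum of the worst-case per-bin power is at least \(E_{\mathrm{u}}/N\).

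\emph{Achievability.} Take \(U[k] = \sqrt{E_{\mathrm{u}}/N}\,e^{j\theta_k}\) with arbitrary phases \(\theta_k\). This sequence meets the energy constraint and has \(\max_k p_k = E_{\mathrm{u}}/N\), so the bound is attained and the minimum equals \(E_{\mathrm{u}}/N\). Because the dechirp-and-DFT map is invertible (unit-modulus multiplication followed by a unitary DFT), such a \(U\) comes from an actual time-domain waveform \(u[n]\). This remark is not needed for the statement, which is posed directly over \(\{U[k]\}\).

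\emph{Uniqueness.} The "if" direction is the achievability step. For "only if", suppose \(\max_k p_k = E_{\mathrm{u}}/N\). Then \(p_k \le E_{\mathrm{u}}/N\) for every \(k\), so each \(E_{\mathrm{u}}/N - p_k\) is a nonnegative term. These terms sum to \(E_{\mathrm{u}} - E_{\mathrm{u}} = 0\), and a sum of nonnegative terms vanishes only if every term vanishes, which forces \(p_k = E_{\mathrm{u}}/N\) for all \(k\). The only point needing care is stating the characterization in terms of magnitudes \(|U[k]|^2\), since the phases are unconstrained. The degenerate case \(E_{\mathrm{u}} = 0\) is covered by the same argument.
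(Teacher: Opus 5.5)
Your proof is correct and follows the same route as the paper: the max-versus-arithmetic-mean bound $\max_k |U[k]|^2 \ge E_{\mathrm{u}}/N$, with equality exactly when all $|U[k]|^2$ are equal. You also spell out two steps the paper's one-line equality claim leaves implicit: the explicit achieving sequence, and the ``only if'' direction via the vanishing sum of the nonnegative terms $E_{\mathrm{u}}/N - |U[k]|^2$.
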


\begin{proof}
By the inequality between the maximum and the arithmetic mean, for any
nonnegative sequence \(\{a_k\}_{k=0}^{N-1}\),
\begin{equation}
    \max_k a_k
    \;\ge\; \frac{1}{N} \sum_{k=0}^{N-1} a_k.
\end{equation}
Applying this to \(a_k = |U[k]|^2\) and using
\(\sum_{k=0}^{N-1} |U[k]|^2 = E_{\mathrm{u}}\), we obtain
\begin{equation}
    \max_k |U[k]|^2
    \;\ge\; \frac{1}{N} \sum_{k=0}^{N-1} |U[k]|^2
    = \frac{E_{\mathrm{u}}}{N}.
\end{equation}
Equality holds if and only if \(|U[k]|^2\) is constant over \(k\). Thus the uniform allocation uniquely minimizes the worst-case per-bin energy under the total-energy constraint.
\end{proof}

\subsection{High-SF LoRa Waveform on a Low-SF Symbol}

We now specialize the above analysis to a concrete and practically relevant case: superposing a (part of) higher-SF LoRa waveform on top of a lower-SF LoRa symbol, and examine how the resulting pattern of decision metric approximates the uniformly distributed case discussed above. 

Suppose the spreading factors are \(\mathrm{SF}_l\) and \(\mathrm{SF}_h\) with \(h>l\). Assume that, over one \(\mathrm{SF}_l\) symbol interval, the contribution of this \(\mathrm{SF}_h\) symbol to the received signal is given by a contiguous segment of length \(N_l\) starting at some offset \(n_s \in \{0,\dots,N_h-N_l\}\). After dechirping with the \(\mathrm{SF}_l\) upchirp and applying an \(N_l\)-point DFT, the contribution by an \(\mathrm{SF}_h\) waveform to bin \(k\) can be written as
\begin{equation}
    U[k]
    = \frac{1}{\sqrt{N_l}}\sum_{n=0}^{N_l-1}
        \exp\!\left\{
            j\pi p n^2 + j 2\pi q_k n +C
        \right\},
    \label{Uk}
\end{equation}
where \(k=0,\dots,N_l-1\), \(N_l = 2^{\mathrm{SF}_l}\) and \(N_h = 2^{\mathrm{SF}_h}\) denote the numbers of samples per symbol for the low-SF and high-SF chirps, respectively, and \(C\) is a term which is not related to \(n\). The coefficients are defined as
\begin{equation}
    p = \frac{N_l - N_h}{N_l N_h} \neq 0,
    \qquad
    q_k = \frac{n_s+s_h}{N_h} - \frac{k}{N_l},
    \label{eq:a_bk_def}
\end{equation}
where \(s_h\) is the \(\mathrm{SF}_h\) symbol index.

\begin{lemma}[High-SF Waveform Seen by a Low-SF Demodulator]
\label{lem:highSF_wideband}
Let \(U[k]\) be defined as in \eqref{Uk}. Then there exists a contiguous index set
\(K \subset \{0,\dots,N_l-1\}\) with cardinality
\begin{equation}
    |K| \approx N_l\Bigl(1 - \frac{N_l}{N_h}\Bigr),
\end{equation}
such that
\begin{equation}
    |U[k]| \approx \frac{1}{\sqrt{N_l \lvert p \rvert}}, \qquad k \in K,
\end{equation}
while \(|U[k]|\) is negligible for \(k \notin K\).
\end{lemma}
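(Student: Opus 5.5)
The plan is to treat \eqref{Uk} as a discrete chirp sum and evaluate it by the stationary-phase method, equivalently by reading the instantaneous frequency of the dechirped high-SF segment. Dropping the constant \(C\), which only contributes a unit-modulus factor, the summand has phase \(\theta_k(n)=\pi p n^2+2\pi q_k n\). Its derivative in cycles per sample is \(f_k(n)=p n+q_k\). Because \(p<0\) and \(|p|N_l = 1-N_l/N_h<1\), the dechirped signal is a chirp whose frequency sweeps, over the window \(n\in[0,N_l-1]\), an interval of normalized length \(|p|N_l=1-N_l/N_h\) (taken modulo 1, since the DFT bins are periodic in \(q_k\)). The DFT bin \(k\) picks out the frequency \(k/N_l\). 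The first step is therefore to identify \(K\) as the set of bins \(k\) for which the equation \(f_k(n)\equiv 0 \pmod 1\) has a solution \(n^\star_k=-q_k/p\) (after reducing \(q_k\) modulo 1) lying inside \([0,N_l-1]\). Shifting \(k\) by one moves \(q_k\) by \(1/N_l\), so the stationary point moves by \(1/(N_l|p|)\) samples. Hence the number of admissible \(k\) is about \(N_l\cdot N_l|p| = N_l(1-N_l/N_h)\). Since \(f_k(n)\) is monotone in \(n\), these \(k\) form a contiguous set, cyclically in the bin index. This establishes the claimed cardinality.

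Next I would evaluate \(|U[k]|\) for \(k\in K\). I would apply the Poisson summation formula, or equivalently the Euler--Maclaurin approximation of the sum by an integral, which is legitimate because the phase derivative stays below \(1/2\) in magnitude after reduction modulo 1. This replaces the sum by \(\int_0^{N_l}\exp\{j\theta_k(t)\}\,dt\) plus aliased terms that are non-stationary and hence small. Completing the square gives \(\theta_k(t)=\pi p (t-n^\star_k)^2+\text{const}\). When \(n^\star_k\) lies well inside the window and the window extends many Fresnel widths \(1/\sqrt{|p|}\) on both sides, the integral is approximately the full Fresnel integral \(\int_{-\infty}^{\infty} e^{j\pi p u^2}du = e^{-j\pi\,\mathrm{sgn}(p)/4}/\sqrt{|p|}\). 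Including the prefactor \(1/\sqrt{N_l}\), this yields \(|U[k]|\approx 1/\sqrt{N_l|p|}\). As a sanity check, summing \(|U[k]|^2\approx 1/(N_l|p|)\) over \(|K|\approx N_l^2|p|\) bins gives \(N_l\), which is exactly \(E_u\) for a unit-modulus segment of length \(N_l\). This agrees with the energy conservation lemma and makes the approximation self-consistent.

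For \(k\notin K\) the phase has no stationary point in the window. I would integrate by parts once, or equivalently sum by parts in the discrete sum, to bound the contribution by the endpoint terms, of order \(1/(2\pi\,|f_k|_{\min})\) divided by \(\sqrt{N_l}\). Here \(|f_k|_{\min}\) is the distance from the frequency swept by the chirp to the nearest alias of \(k/N_l\). This distance grows linearly in the bin distance from \(K\), so the magnitudes decay like \(1/(\sqrt{N_l}\,\cdot\)distance\()\) and are negligible compared with \(1/\sqrt{N_l|p|}\) except in a few bins adjacent to the edges of \(K\).

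I expect the main obstacle to be making "\(\approx\)" precise near the edges of \(K\). There the stationary point lies within a few Fresnel widths of \(n=0\) or \(n=N_l-1\), so the Fresnel integral is truncated and \(|U[k]|\) oscillates, as in the Fresnel ripple, between roughly half and \(1.2\) times the nominal value. My first attempt would be to write the truncated integral exactly in terms of Fresnel functions \(C(\cdot),S(\cdot)\) evaluated at \(\sqrt{2|p|}\,n^\star_k\) and \(\sqrt{2|p|}\,(N_l-n^\star_k)\). I would then note that these arguments exceed a constant for all but \(O(1/\sqrt{|p|}\cdot N_l|p|)=O(\sqrt{N_l})\) bins, which is a vanishing fraction of \(|K|\). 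I would also need to control the discretization error, which is manageable because the swept bandwidth stays strictly below one DFT period.
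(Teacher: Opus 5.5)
Your proposal is correct and follows the same route as the paper's appendix. Both replace the chirp sum by an integral, locate the stationary point $n_0(k)=-q_k/p$, take $K$ as the bins whose stationary point lies inside the window, read off $|U[k]|\approx 1/\sqrt{N_l|p|}$ from the Fresnel/stationary-phase value, and dismiss the remaining bins as having no stationary point.

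Your reduction of $q_k$ modulo $1$ (so that $K$ is contiguous only cyclically) is a refinement the paper actually needs but omits. Its unreduced condition $0<n_0(k)<N_l$ gives $|K|\approx N_l(1-N_l/N_h)$ only when $(n_s+s_h)/N_h\in[1-N_l/N_h,1]$.

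One correction to your justification of the sum-to-integral step: it holds because exactly one alias in the Poisson sum is stationary in the window, since the swept width $1-N_l/N_h$ is below $1$. It does not hold because the phase derivative stays below $1/2$; over the whole window it generally does not.
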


\begin{proof}
See Appendix.
\end{proof}

For large-SF gaps with \(N_h \gg N_l\), the ratio \(\frac{N_l}{N_h}\) becomes small and \(|K|\approx N_l\), so the higher-SF waveform appears to the low-SF demodulator as an almost flat wideband interferer over all \(N_l\) decision bins, closely matching the near-uniform energy allocation identified in Lemma~\ref{lemma3}. For the example with SF7 and SF12 in Figure~\ref{fig:underlay}, almost all bins (except the bin corresponding to the symbol index) in the magnitude of the decision metric \(|Y_k|\) contain approximately the same nonzero value.

\begin{figure}[!t]
  \centering
  \subfloat[Spectrogram of a single SF7 LoRa symbol.]{
    \includegraphics[width=0.46\linewidth]{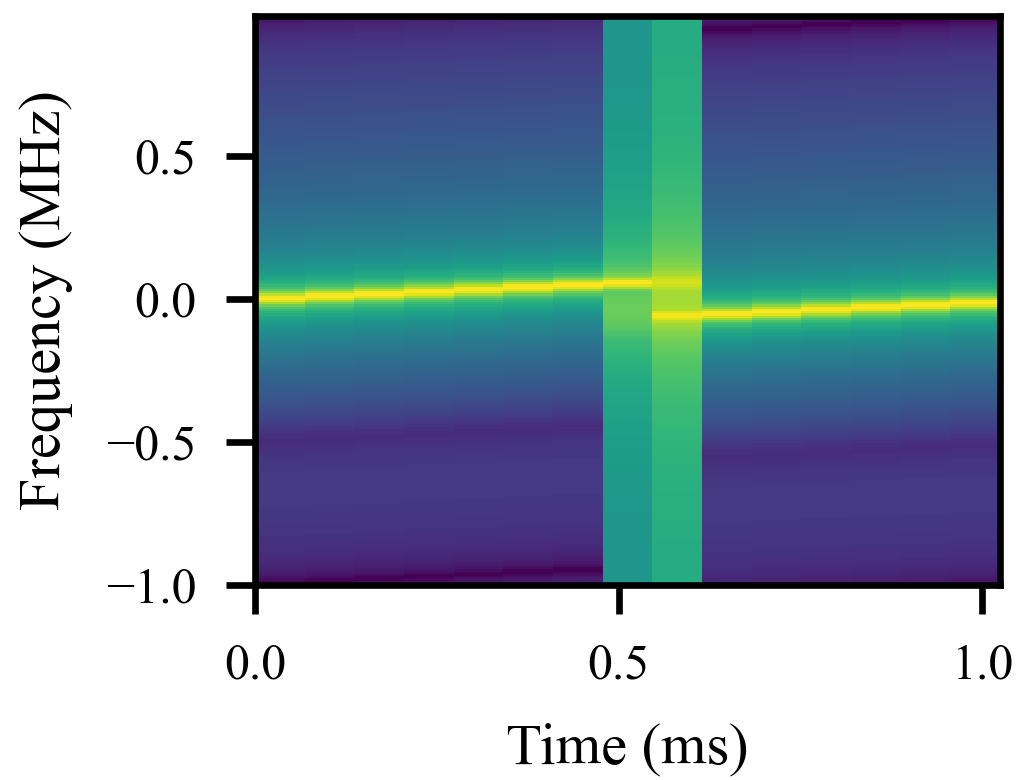}%
    \label{fig:underlay:a}
  }\hfill
  \subfloat[Spectrogram of the SF12 segment embedded within one SF7 symbol interval.]{
    \includegraphics[width=0.46\linewidth]{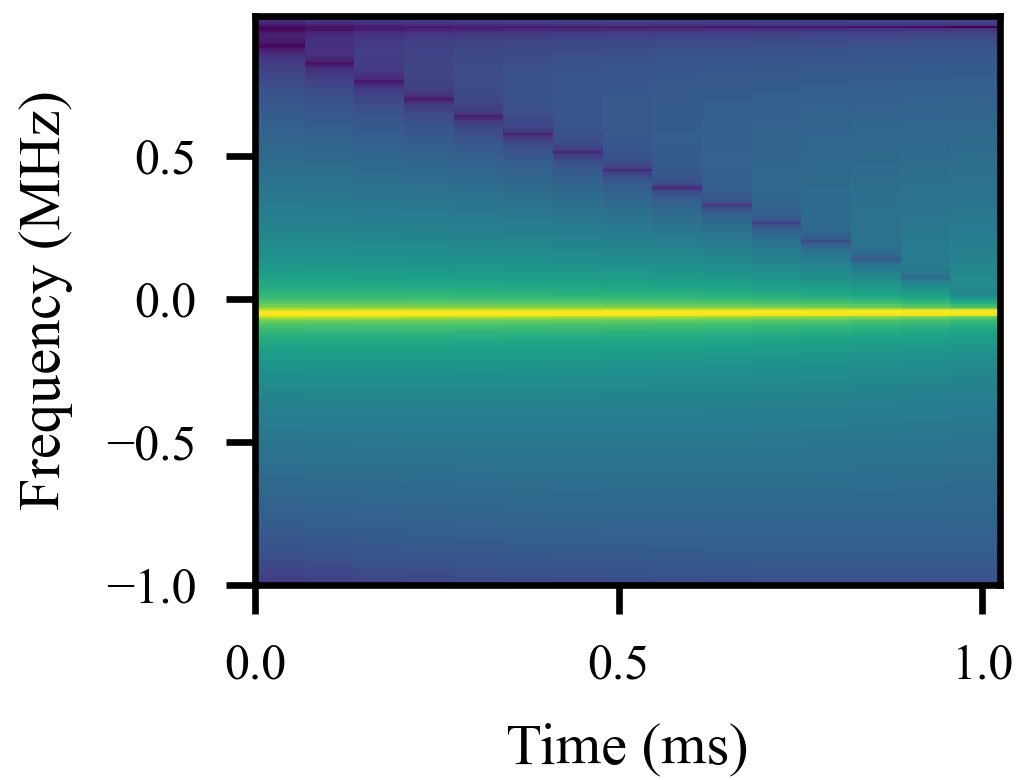}%
    \label{fig:underlay:b}
  }\\[2pt]
  \subfloat[Superposition of the SF7 symbol and the SF12 segment.]{
    \includegraphics[width=0.46\linewidth]{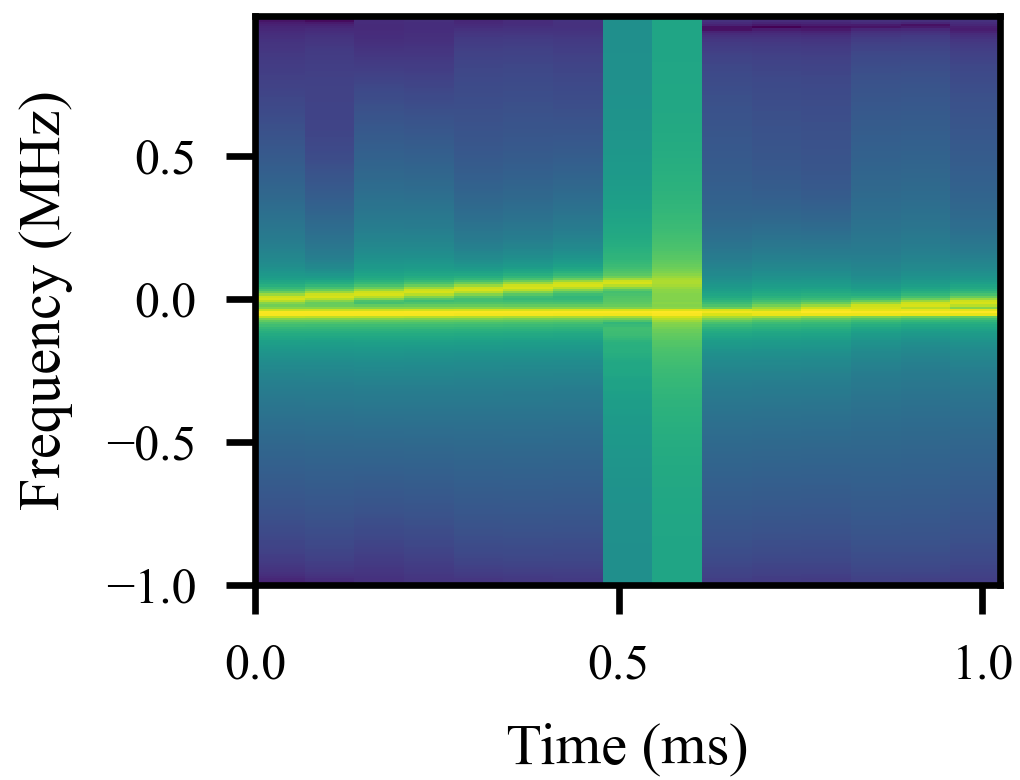}%
    \label{fig:underlay:c}
  }\hfill
  \subfloat[Magnitude of decision metric after SF7 dechirp-and-DFT demodulation.]{
    \includegraphics[width=0.46\linewidth]{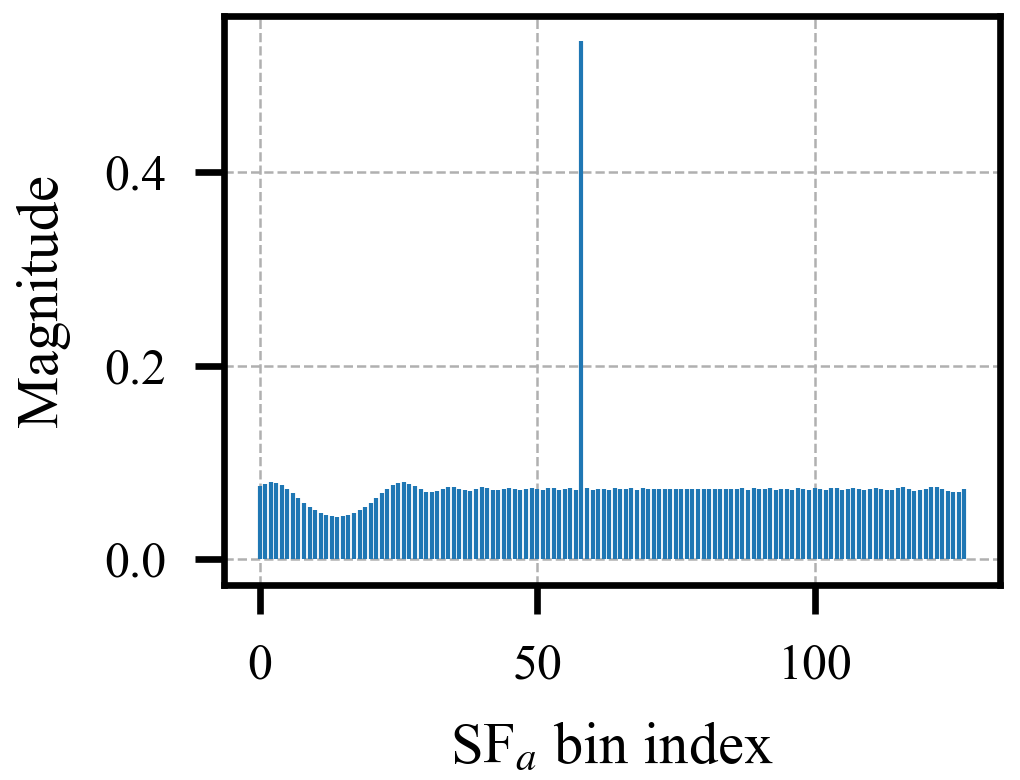}%
    \label{fig:underlay:d}
  }
  \caption{Illustration of a SF12 LoRa waveform embedded within a SF7 symbol. Subfigures~(a)--(c) show the SF7 symbol, the SF12 segment within one SF7 symbol interval, and their superposition, respectively. Subfigure~(d) shows the magnitude of decision metric \(\lvert Y[k]\rvert\) after SF7 dechirp-and-DFT demodulation. }
  \label{fig:underlay}
\end{figure}

\subsection{Proposed High-SF Superposition Scheme}

\begin{figure*}[!t]
    \centering
    \includegraphics[width=0.9\linewidth]{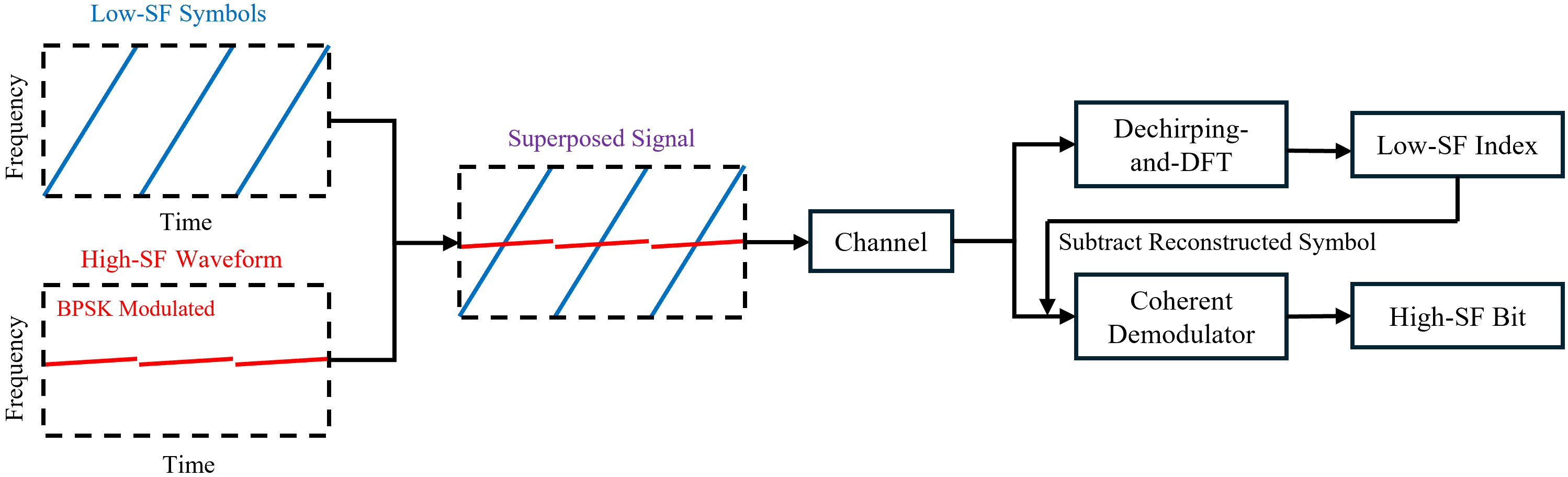}
    \caption{Modulation and demodulation of proposed scheme.}
    \label{fig:long_figure}
\end{figure*}

Building on the above characterization, we now describe our design as shown in Figure~\ref{fig:long_figure}. The key idea is to reuse a chunk of high-SF LoRa \emph{upchirp} (LoRa symbol with index 0) as the carrier of an additional BPSK stream, and to superpose this stream on top of a low-SF LoRa symbol. Within a single low-SF symbol, the instantaneous frequency of the high-SF upchirp changes only slightly. From the perspective of the low-SF dechirp-and-DFT demodulator, the superposed energy therefore spreads approximately uniformly across decision bins. This approximation is most accurate for large SF gaps (\(N_h \gg  N_l\)), where the high-SF segment spreads energy nearly uniformly across the low-SF DFT bins, as quantified in the previous subsection.

Motivated by this near-uniform spreading behavior, we split the high-SF upchirp into segments aligned with the low-SF symbol duration and embed information by applying BPSK phase rotations to each segment. Choosing a given high-SF segment is equivalent to selecting a corresponding frequency channel within the low-SF band. Let \(x_0'[n]\) denote the high-SF upchirp segment with normalized power and let \(c\) be the complex data symbols drawn from a BPSK constellation. The high-SF signal (with normalized power) over the one low-SF symbol is then given by
\begin{equation}
    x_{h}[n] = c\, x_0'[n], \quad n = 0,\dots,N_l-1.
\label{eq:uul}
\end{equation}
The total transmitted signal is the sum of the low-SF LoRa symbol and the appropriately power-scaled waveform \(x_h[n]\).

At the receiver, we first demodulate the low-SF LoRa symbol by a regular dechirp-and-DFT demodulator. Using the detected low-SF symbols, the receiver reconstructs the corresponding baseband low-SF waveform and subtracts this reconstructed waveform from the received mixture. After this cancellation step, the residual signal ideally contains only the high-SF segment plus noise. The residual high-SF component behaves, over each low-SF symbol interval, approximately as a narrowband tone whose center frequency can be fixed and agreed upon a priori by choosing a fixed high-SF segment. The demodulation path of the superposed layer therefore operates on this residual signal: it first passes the residual through a band-pass filter, and then performs coherent BPSK demodulation by correlating the filtered signal with the known high-SF segment. Since the LoRa preamble and header already provide timing and frequency synchronization, the receiver for superposed waveform can reuse this synchronization information. In this architecture, the high-SF waveform both mildly interferes with the low-SF symbol and serves as a coherent carrier for the BPSK stream.

\section{Error Probability Analysis with High-SF Waveform}
\label{sec:error_analysis}

For practical hardware implementations, the received signal is typically oversampled to facilitate sampling time offset (STO) compensation and symbol-boundary detection. Let the oversampling factor be denoted by \(\beta\).

\subsection{Symbol Error Rate of LoRa Demodulation}

Let \(P_{l}\) denote the power of low-SF LoRa signal and \(P_{\text{n}}\) denote the noise power, the \emph{baseline} signal-to-noise ratio (SNR) is defined as
\begin{equation}
    \gamma =  \frac{P_l}{P_{\text{n}}}.
\label{SNR}
\end{equation}

The LoRa symbols are identified and segmented; after down-sampling, each symbol is comprised of \(N_l\) samples. At the final step of LoRa demodulation, the receiver applies dechirping followed by an \(N_l\)-point DFT, and decides on the symbol index by selecting the DFT bin with the largest magnitude according to \eqref{decision}.  For a transmitted symbol \(s_l \in
\{0,\dots,N_l-1\}\), the dechirp-and-DFT operation maps all LoRa energy into the \(s_l\)-th bin, while the additive noise is spread across all \(N_l\) bins.
Let \(Y_k\) denote the complex output of bin \(k\). We can write
\begin{equation}
  Y_k =
  \begin{cases}
    \sqrt{N_lP_{l}} + W_k, & k = s_l,\\[0.3em]
    W_k,       & k \neq s_l,
  \end{cases}
\end{equation}
where \(k = 0,\dots,N_l-1\) and \(W_k \sim \mathcal{CN}(0,P_{\text{n}})\) are complex Gaussian noise with power \(P_{\text{n}}\), and the factor $\sqrt{N_l P_{l}}$ follows from conservation of the time-domain signal energy under an \(N_l\)-point unitary DFT normalization.

For noise-only bins, the magnitude \(R_k=|Y_k|\) follows a Rayleigh distribution. For an amplitude value \(r \ge 0\), the probability density function (PDF) and cumulative distribution function (CDF) are
\begin{align}
  f_{\text{Ra}}(r;\gamma)
    &= \frac{2r\gamma}{P_{l}}
       \exp\!\left(-\frac{r^2\gamma}{P_{l}}\right), \\
  F_{\text{Ra}}(r;\gamma)
    &= 1 - \exp\!\left(-\frac{r^2\gamma}{P_{l}}\right).
\end{align}

The signal bin \(k = s_l\) contains both signal and noise. Since \(Y_{s_l} \sim \mathcal{CN}\!\bigl(\sqrt{N_l P_{l}},\, \frac{P_l}{\gamma}\bigr)\), its amplitude \(R_{s_l}=|Y_{s_l}|\) follows a Rice distribution. For an amplitude value \(r \ge 0\), the probability density function is
\begin{equation}
  f_{\text{Ri}}(r;\gamma)
    = \frac{2r\gamma}{P_{l}}
      \exp\!\left(-\frac{r^2\gamma + N_l P_{l}\gamma}{P_{l}}\right)
      I_0\!\left(\frac{2r\gamma\sqrt{N_l P_{l}}}{P_{l}}\right),
\end{equation}
where \(I_0(\cdot)\) is the modified Bessel function of the first kind.

Let
\begin{equation}
    R'_{\max} = \max_{k \neq s_l} R_k
\end{equation}
denote the maximum over the \((N_l-1)\) noise-only bins. Since the \(R_k\) for \(k \neq s_l\) follows Rayleigh distribution, the CDF of \(R'_{\max}\) is
\begin{equation}
  F_{R'_{\max}}(r;\gamma)
    = \Pr\{R'_{\max} \le r\}
    = \bigl[F_{\mathrm{Ra}}(r;\gamma)\bigr]^{N_l-1}.
\end{equation}

A correct decision occurs when the signal bin magnitude exceeds the largest noise bin magnitude, yielding the correct decision probability
\begin{align}
  P_c(\gamma)
    &= \Pr\{R_{s_l} > R'_{\max}\}\\
    &= \int_0^\infty
        \Pr\{R'_{\max} < r\}\,
        f_{\mathrm{Ri}}\bigl(r;\gamma)\,
      \mathrm{d}r\\
    &= \int_0^\infty
        \bigl[F_{\mathrm{Ra}}(r;\gamma)\bigr]^{N_l-1}
        f_{\mathrm{Ri}}\bigl(r;\gamma)\,
      \mathrm{d}r.
\end{align}
Therefore, the symbol error rate (SER) of ideal LoRa in an AWGN channel can be expressed as
\begin{equation}
  P_e
    = 1 - P_c(\gamma)
    = 1 - \int_0^\infty
          \bigl[F_{\mathrm{Ra}}(r;\gamma)\bigr]^{N_l-1}
          f_{\mathrm{Ri}}\bigl(r;\gamma)\,
        \mathrm{d}r.
  \label{eq:SER}
\end{equation}

Assume the power of high-SF segment as \(P_{h}\), the ratio between the power of low-SF and high-SF waveform (LHR) is given by
\begin{equation}
    \kappa = \frac{P_l}{{P_h}}.
\end{equation}
As discussed in Section~\ref{II}, when a part of high-SF symbol is superposed on top of a low-SF link, its energy is spread approximately uniformly across the decision bins of the low-SF demodulator. It is therefore natural to model the high-SF waveform as an additional white Gaussian noise with power \(P_h\). Under this model, we define the effective SNR for low-SF signal in the presence of the high-SF waveform as

\begin{equation}
    \gamma_l
  = \frac{P_{l}}{P_{h} + P_{\text{n}}}
  = \frac{\gamma\,\kappa}{\gamma+\kappa}.
\label{snr_eff}
\end{equation}

\subsection{Bit Error Rate of Superposition Layer}
To demodulate the superposed waveform, we first reconstruct the low-SF LoRa symbol and subtract it from the received signal, and then filter out-of-band noise by a band-pass filter. The superposed-layer analysis assumes ideal reconstruction and subtraction of the low-SF waveform; residual cancellation error would introduce additional interference at the correlator and increase the required SNR.

Under the assumption of perfect cancellation, the resulting signal can be written as
\begin{equation}
  r'[n] = \sqrt{P_{h}}\,c\,x_0'[n] + w[n],\quad n = 0,\dots,\beta N_l-1,
\end{equation}
with normalized high-SF waveform \(x_0'[n]\) and AWGN \(w[n]\). 

To demodulate the superposed waveform, we apply a coherent correlator with template \(x_0'[n]\) to obtain the complex decision variable
\begin{align}
  z &= \sum_{n=0}^{\beta N_l-1} x_0'^{\ast}[n]\,r'[n]\\
    &= \sqrt{P_{h}}\,c \sum_{n=0}^{\beta N_l-1} |x_0'[n]|^2
      + \sum_{n=0}^{\beta N_l-1} x_0'^{\ast}[n] w[n].
\end{align}
With the normalized waveform \(x_0'[n]\), we have \(\sum_{n} |x_0'[n]|^2 \approx \beta N_l\), where \(\beta\) is the oversampling factor. The useful signal term then has squared magnitude on the order of \(P_{h} \beta^2N_l^2\), whereas the accumulated noise power scales as \(P_\text{n}\beta N_l\). Hence the effective SNR for the high-SF superposed waveform becomes
\begin{equation}
  \gamma_{h}
  = \frac{P_{h} \beta N_l^2}{P_\text{n} \beta N_l}
  = \frac{P_{h}}{P_\text{n}}\,\beta N_l
  = \frac{\gamma}{\kappa}\,\beta N_l.
  \label{eq:snr_ul_beta}
\end{equation}

Therefore, the bit error rate (BER) is given by
\begin{equation}
  P_b
  = Q\!\bigl(\sqrt{2\gamma_{h}}\bigr)
  = Q\!\biggl(\sqrt{2\frac{\gamma}{\kappa}\,\beta N_l}\biggr),
  \label{eq:BER}
\end{equation}
where \(Q(\cdot)\) is the standard Gaussian \(Q\)-function.

\subsection{Feasible Region for \(\gamma\) and \(\kappa\)}

To enable simultaneous data transmission for both waveforms, we target a region of operation where both signals achieve reliable demodulation performance. We take \(\text{SF}_l=7\) and \(\text{SF}_h=12\) as examples. Let the oversampling factor be \(\beta = 16\); that is, for a LoRa bandwidth of \(B = 125\,\mathrm{kHz}\), the device sampling rate is \(2\,\mathrm{MHz}\).

According to \cite{Semtech}, the minimum SNR required for successful LoRa SF7 demodulation is approximately \(-6\,\mathrm{dB}\), which means

\begin{equation}
    \gamma_{l}
    = \frac{\gamma\,\kappa}{\gamma+\kappa}\ge -6\text{ dB}.
\label{eq:lora_constraint}
\end{equation}

For the high-SF superposition transmission, we impose a target BER of \(P_b \le 10^{-5}\), so

\begin{equation}
    P_b = Q\!\biggl(\sqrt{2\frac{\gamma}{\kappa}\,\beta N_l}\biggr)\le 10^{-5}.
\label{eq:bpsk_constraint}
\end{equation}

Constraints \eqref{eq:lora_constraint} and \eqref{eq:bpsk_constraint} jointly define the feasible SNR region for two-waveform coexistence. As Figure~\ref{fig:feasible} shows, the feasible region can be visualized by sweeping \((\gamma,\kappa)\) and plotting the boundary where both inequalities hold. 

\begin{figure}[htbp]
\centerline{\includegraphics[width=1\linewidth]{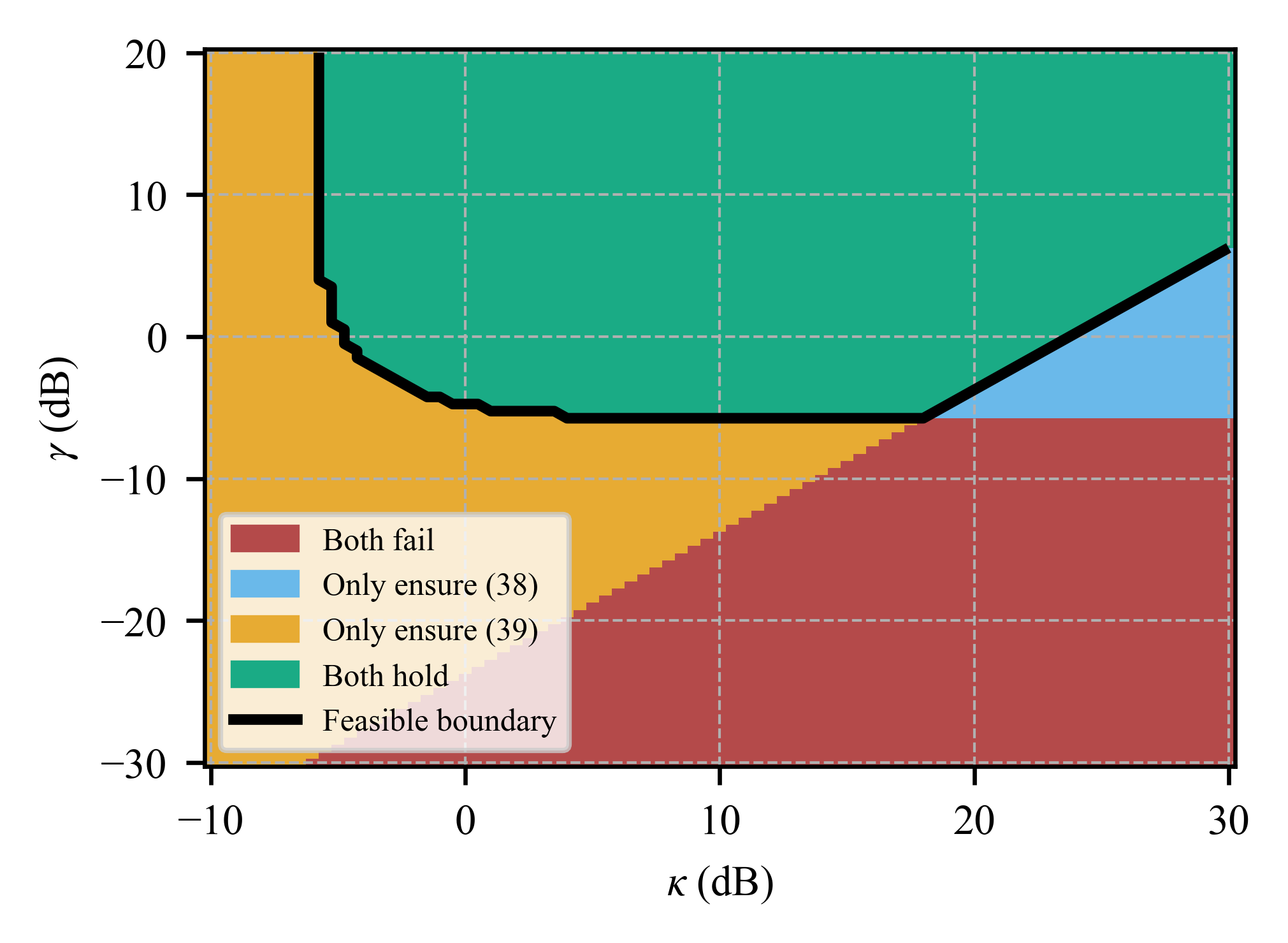}}
\caption{Feasible region of baseline SNR \(\gamma\) and LHR \(\kappa\). Colors indicate which of the low-SF and high-SF constraints are satisfied, and the solid curve marks the boundary where both constraints hold.}
\label{fig:feasible}
\end{figure}

\section{Simulation Results}

We now compare the analytical SER and BER expression in \eqref{eq:SER} and \eqref{eq:BER} with simulation results for a SF7 symbol in the presence of one SF12 superposed waveform. In all experiments, the SF7 and SF12 signals are transmitted over an AWGN channel and the oversampling factor of the receiver is \(16\). All the markers indicate Monte-Carlo simulation results obtained over \(10^5\)~transmitted symbols.

\subsection{Symbol Error Rate of Low-SF Layer}

\subsubsection{\textbf{SER vs. \(\gamma\) for Different \(\kappa\)}}

Figure~\ref{SER_SNR} shows the SER as a function of the baseline SNR~\(\gamma\) for three LHRs~\(\kappa\) with a single superposed SF12 waveform. For each LHR value, the solid curves show the theoretical SER from~\eqref{eq:SER} evaluated with the effective SNR for SF7 signal. Here, \(\kappa = +\infty\) corresponds to the case with no superposed signal.

\begin{figure}[htbp]
\centerline{\includegraphics[width=1\linewidth]{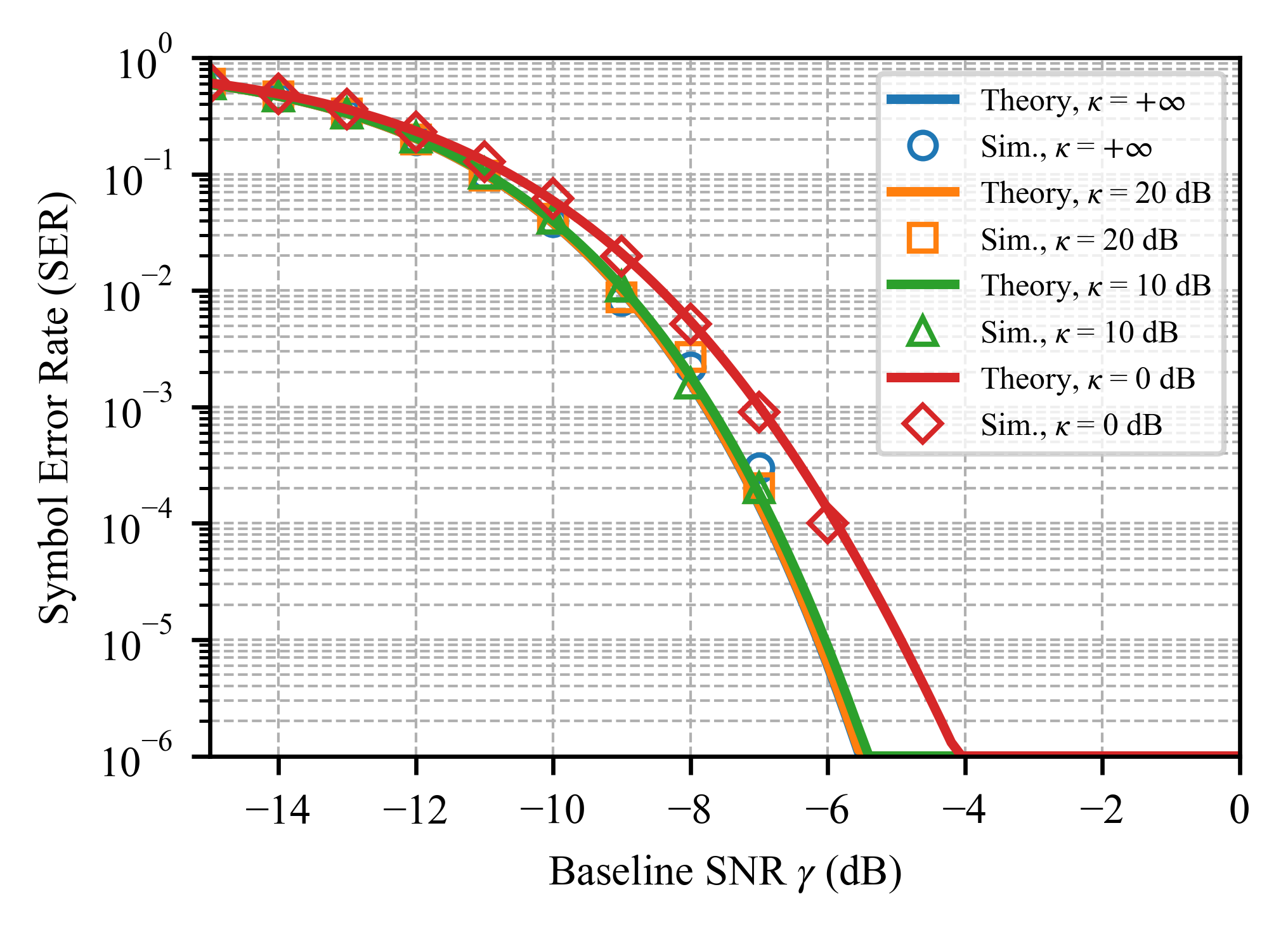}}
\caption{Simulated SER of SF7 symbols with one SF12 superposed waveform as a function of baseline SNR \(\gamma\) for various LHRs \(\kappa\), compared with the theoretical curves.}
\label{SER_SNR}
\end{figure}

Figure~\ref{SER_SNR} shows that the theoretical SER curves closely match the simulation results, indicating that the effective-SNR model in~\eqref{snr_eff} accurately captures the impact of the superposed waveform.

\subsubsection{\textbf{Collapse in the Effective-SNR Domain}}

To further validate the interpretation of the high-SF waveform as additional wideband noise, we re-plot all simulation points in terms of the effective SNR for low-SF defined in~\eqref{snr_eff}. Figure~\ref{SER_effSNR} shows the resulting scatter plot, together with the theoretical SER curve.

\begin{figure}[htbp]
\centerline{\includegraphics[width=1\linewidth]{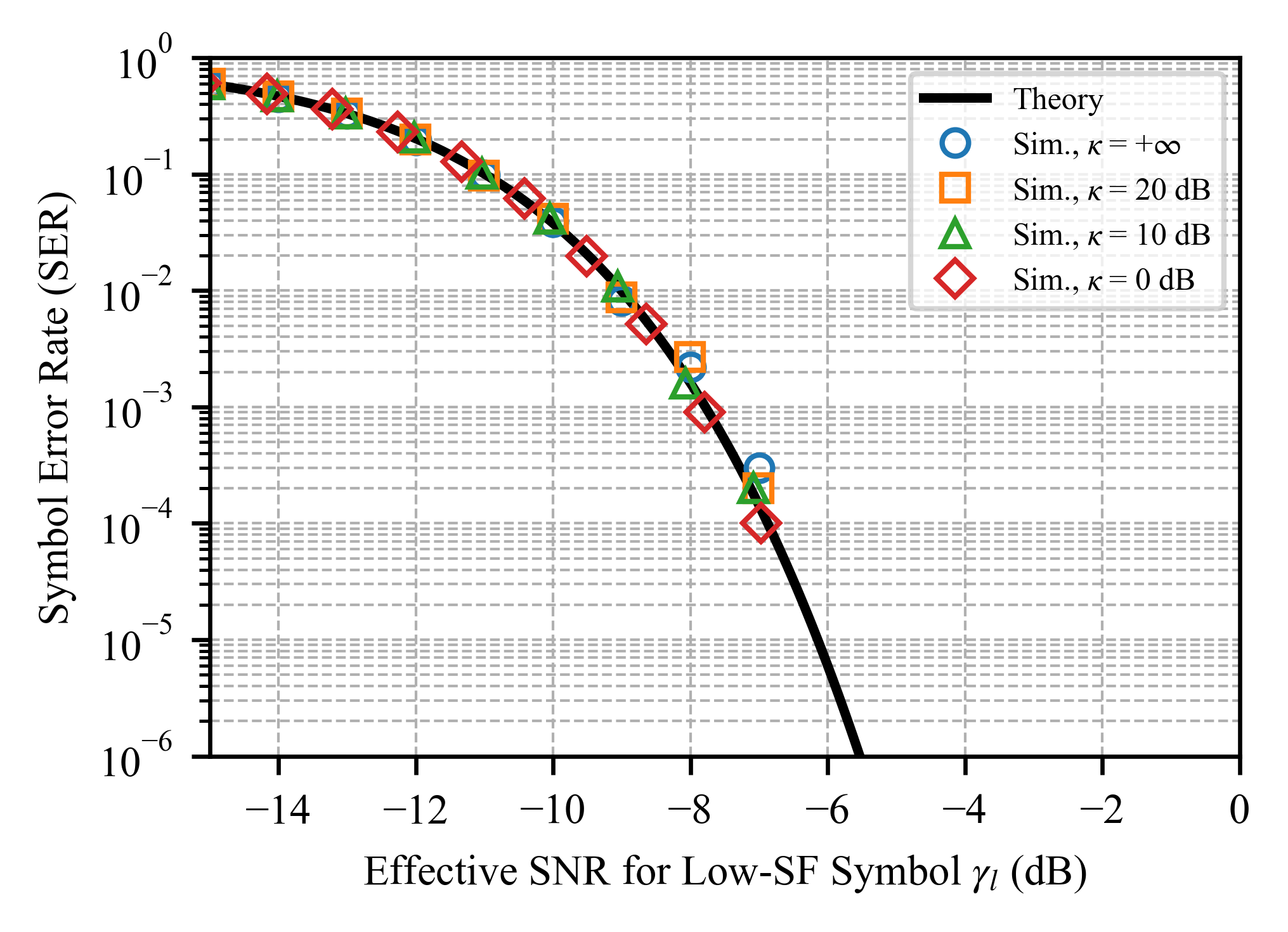}}
\caption{Simulated SER of SF7 symbols with one SF12 superposed waveform as a function of effective SNR \(\gamma_{l}\) for various LHRs \(\kappa\), compared with the theoretical curve.}
\label{SER_effSNR}
\end{figure}

We observe that, once expressed in terms of the effective SNR \(\gamma_l\), all SER measurements taken under different superposed power collapse onto a single curve that closely follows the theoretical curve. This collapse supports the interpretation that, as seen by the low-SF demodulator, the high-SF waveform primarily manifests as an elevated effective noise floor.

\subsection{Bit Error Rate of Superposition Layer}

\subsubsection{\textbf{BER vs. \(\gamma\) for Different \(\kappa\)}}

Figure~\ref{fig:BER_SNR} depicts the theoretical and simulated BER of the high-SF waveform as a function of the baseline SNR \(\gamma\) for various LHRs \(\kappa\). For each curve, \(\kappa\) is kept
constant, so that increasing \(\gamma\) is equivalent to increasing the power of high-SF waveform. The results also show a good agreement between simulations and the theoretical analysis.

\subsubsection{\textbf{Collapse in the Effective-SNR Domain}}

We also re-plot the simulation results as a function of this effective SNR for high-SF waveform according to \eqref{eq:snr_ul_beta} and \eqref{eq:BER}. As shown in Figure~\ref{fig:BER_effSNR}, all BER results obtained under various \(\kappa\) lie very close to the theoretical curve, confirming the correctness of our analysis.

\begin{figure}[htbp]
\centerline{\includegraphics[width=1\linewidth]{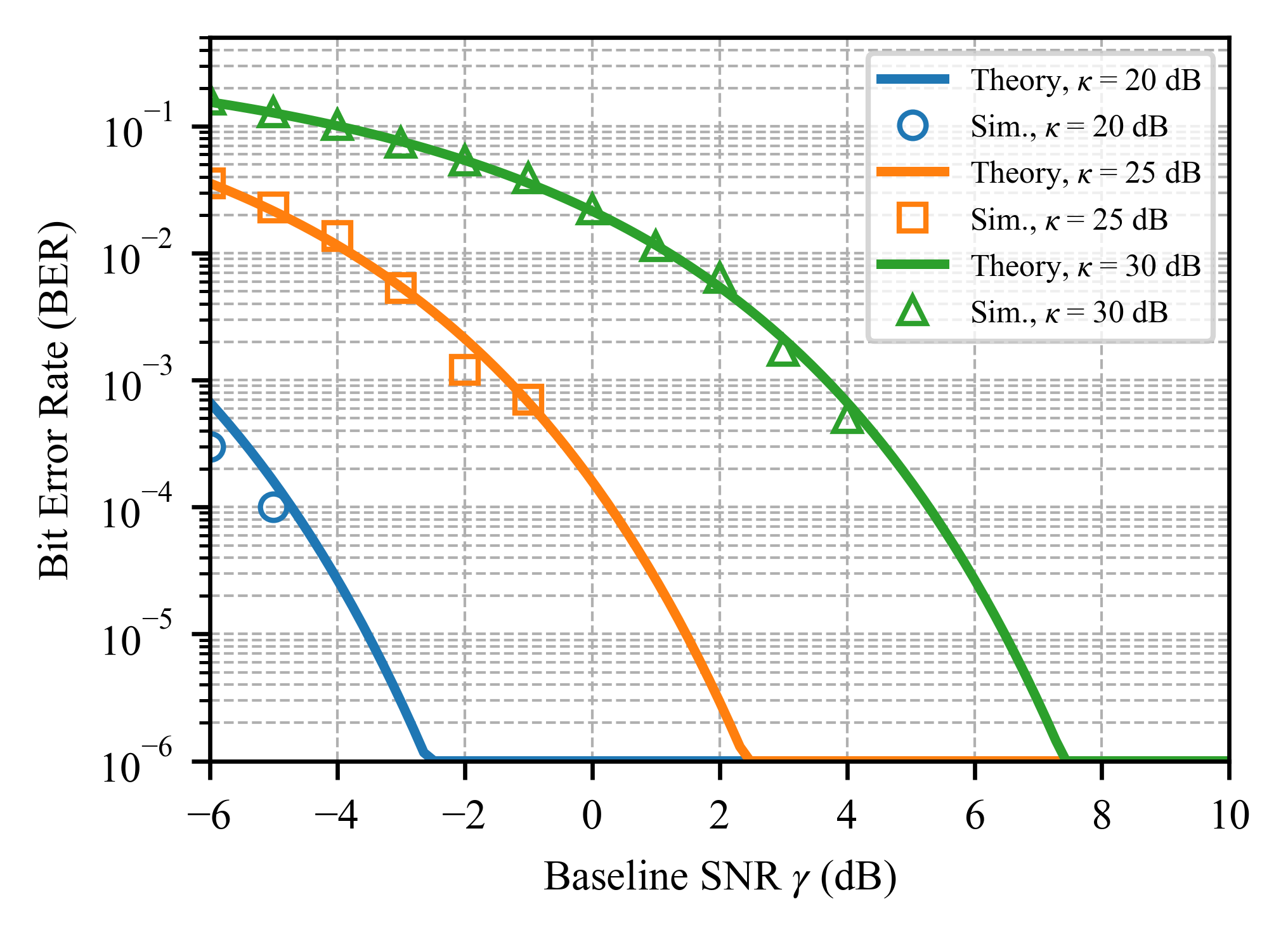}}
\caption{Simulated BER of SF12 waveform as a function of baseline SNR \(\gamma\) for various LHRs \(\kappa\), compared with the theoretical curves.}
\label{fig:BER_SNR}
\end{figure}

\begin{figure}[htbp]
\centerline{\includegraphics[width=1\linewidth]{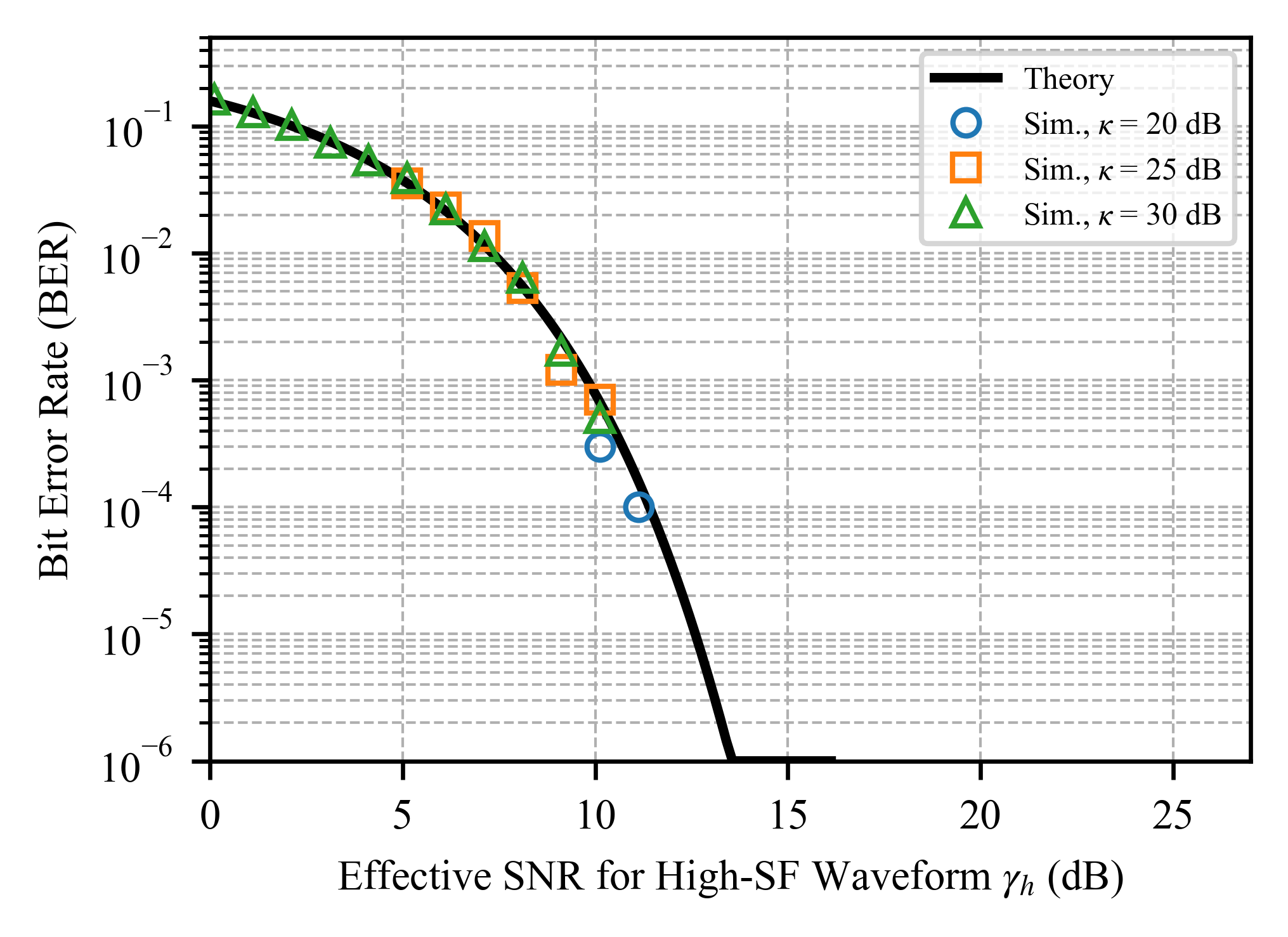}}
\caption{Simulated BER of SF12 waveform as a function of effective SNR \(\gamma_h\) for various LHRs \(\kappa\), compared with the theoretical curve.}
\label{fig:BER_effSNR}
\end{figure}

\section{Conclusion and Future Work}

\subsection{Conclusion}

This paper proposes a waveform-level superposition scheme that overlays an additional signal on top of a standard LoRa symbol, thereby filling the gaps in data rate and spectral efficiency between adjacent spreading factors. We first prove that no non-zero waveform can be completely transparent to the dechirp-and-DFT demodulator, and then derive the conditions under which a superposed signal induces only marginal degradation. Building on this result, we show that a high-SF LoRa waveform is a suitable choice for the superposed layer, as its energy spreads approximately uniformly across the low-SF decision bins, closely matching the theoretically optimal uniform allocation identified by our analysis. This near-uniform spreading allows the high-SF interference to be accurately modeled as an elevated noise floor, leading to a simple effective-SNR characterization for the low-SF layer. Under this model, we derive closed-form error-rate expressions for both the standard LoRa layer and the superposed BPSK stream, characterizing their reliability as functions of the power allocation ratio and channel conditions. Monte-Carlo simulations confirm strong agreement with the analytical predictions. The results demonstrate that, by appropriately adjusting the power ratio between the two layers, reliable transmission can be achieved for both layers simultaneously with only a modest penalty on the legacy LoRa link.

\subsection{Future Work}

We will extend the analysis to multiple simultaneously superposed signals. Because the instantaneous frequency of a high-SF waveform varies slowly within one low-SF symbol interval, different segments of the high-SF upchirp correspond to quasi-narrowband sub-channels at different center frequencies, suggesting a frequency-division style multiplexing of multiple independent superposed streams. The number of approximately non-overlapping segments scales on the order of \(2^{SF_h - SF_l}\), but in practice is limited by the available power budget, guard requirements, and spectral leakage between adjacent sub-channels. Quantifying the inter-stream interference and optimizing the power allocation across these sub-channels will be a key analytical focus. We will also consider higher-order modulations such as QPSK for the superposed layer, which would further improve spectral efficiency at the cost of increased receiver complexity and a higher SNR requirement. The trade-off between modulation order, power allocation ratio, and error-rate performance of both layers needs detailed investigation. In addition, the present analysis assumes an AWGN channel and perfect reconstruction of the low-SF waveform for successive cancellation; extending the error-rate analysis to fading channels and quantifying the sensitivity to imperfect cancellation are important steps toward assessing the scheme's robustness in realistic deployments.  Finally, we plan to validate the scheme on software defined radio (SDR) platforms with over-the-air experiments, evaluating the impact of practical impairments such as carrier frequency offset and power amplifier nonlinearity, and providing a broader comparison against existing LoRa enhancement techniques.

\appendix
\section{Appendix A}
\label{appa}

After the low-SF demodulator, the contribution of high-SF waveform in bin \(k\) is the quadratic exponential sum
\begin{equation}
    U[k]
    = \frac{1}{\sqrt{N_l}}\sum_{n=0}^{N_l-1}
        \exp\!\left\{
            j\pi p n^2 + j 2\pi q_k n + C
        \right\},
\end{equation}
with
\begin{equation}
    p = \frac{N_l - N_h}{N_l N_h}, 
    \qquad
    q_k = \frac{n_s + s_h}{N_h} - \frac{k}{N_l},
\end{equation}
and phase function
\begin{equation}
    \phi_k[n] = \pi p n^2 + 2\pi q_k n + C,
\end{equation}
where \(C\) is not related to \(n\).
We approximate the sum by the integral
\begin{equation}
    U[k]
    \approx \frac{1}{\sqrt{N_l}}\int_{0}^{N_l}
        \exp\!\left\{ j\phi_k(t) \right\}\,\mathrm{d}t,
    \label{eq:U_integral_approx_short}
\end{equation}
which is of the standard stationary-phase form
\(\int g(t)\,e^{j\lambda f_k(t)}\mathrm{d}t\) with
\begin{equation}
    g(t) \equiv \frac{1}{\sqrt{N_l}},\quad
    f_k(t) = \frac{\phi_k(t)}{N_l},\quad
    \lambda = N_l.
\end{equation}

The first and second derivatives of the phase are
\begin{equation}
    f_k'(t) = \frac{2\pi (p t + q_k)}{N_l}, 
    \qquad
    f_k''(t) = \frac{2\pi p}{N_l},
\end{equation}
so the stationary point is
\begin{equation}
    f_k'(n_0(k)) = 0
    \quad\Rightarrow\quad
    n_0(k) = -\frac{q_k}{p}.
    \label{eq:n0_of_k_short}
\end{equation}
The stationary-phase contribution is relevant only when the stationary
point lies inside the observation window:
\begin{equation}
    0 < n_0(k) < N_l.
\end{equation}
The set of indices \(k\) satisfying this inequality forms a contiguous
block \(K \subset \{0,\dots,N_l-1\}\) with cardinality
\begin{equation}
    |K| \approx N_l\Bigl(1 - \frac{N_l}{N_h}\Bigr).
\end{equation}

For \(k \in K\), we can apply the standard stationary-phase formula
(see, e.g., \cite{stationary}) to \eqref{eq:U_integral_approx_short}. Since
\(f_k''(t)\) is constant and independent of \(k\), we obtain the
leading-order approximation
\begin{equation}
    U[k]
    \approx
    \frac{1}{\sqrt{N_l}}\,
    \exp\!\bigl\{j\phi_k(n_0(k))\bigr\}
    \sqrt{\frac{2\pi}{|\phi_k''(n_0(k))|}}\,
    e^{\pm j\pi/4},
\end{equation}
so that the magnitude is approximately
\begin{equation}
    |U[k]|
    \approx
    \frac{1}{\sqrt{N_l|p|}},
    \qquad k \in K,
\end{equation}
and is therefore essentially flat over the contiguous block \(K\). For \(k \notin K\), no stationary point lies in \([0,N_l]\), and the contribution is much smaller.

\bibliographystyle{unsrt} 
\bibliography{ref}        
\vspace{12pt}

\end{document}